\newtheorem{algorithm}{Algorithm}[section]
\newtheorem{lemma}{Lemma}[section]
\numberwithin{equation}{section}
\begin{document}

\title{ Modified Signed Log-Likelihood Ratio Test for \\
 Comparing the Correlation Coefficients  of Two \\
 Independent Bivariate Normal Distributions}
\author{ M. R. Kazemi$^{1}$  , A. A. Jafari$^{2,}$\thanks{%
Corresponding: aajafari@yazd.ac.ir}\\
{\small $^1$Department of Statistics, Fasa University, Fasa, Iran}\\
{\small $^2$Department of Statistics, Yazd University, Yazd, Iran}\\
}
\date{}
\maketitle

\begin{abstract}
In this paper, we use the method of modified signed log-likelihood ratio test for the problem of testing the equality of correlation
coefficients  in two independent bivariate normal distributions. We compare this method  with two other 
approaches,  Fisher's Z-transform and generalized test variable, using  a Monte Carlo simulation. It indicates that  the proposed method is better
than the other approaches, in terms of the actual sizes  and powers especially when the sample sizes are unequal. We illustrate  performance of the proposed approach, using a real data set.

\noindent \textbf{Keywords}: Bivariate normal distribution;  Actual size; Correlation coefficient;  Parametric bootstrap; Power.
\end{abstract}

\section{Introduction}

The linear association between two  normal variables is usually measured by correlation coefficient. Statistical inferences for this parameter
are
divided to a single bivariate sample and several bivariate samples problems. In  the case of a single sample,
\cite{fisher-15}
 for the first time and then
\cite{hotelling-53}
provided the exact density function of  product moment correlation coefficient.   In testing and constructing the confidence
interval for the correlation coefficient, \cite{fisher-21} 
introduced the  well-known Fisher Z-transform, \cite{su-wo-07}
proposed a likelihood-based higher-order asymptotic method, and \cite{kr-xi-07}
proposed a generalized pivotal approach. \cite{ka-ja-15} compared some confidence intervals for the correlation coefficient.

The problem of equality of two correlations arises practically,   for example in comparing the correlations between the laterality of blood flow in each brain region and verbal
memory score across gender \cite[see][]{bi-br-gu-04}.
For inference about this problem, \cite{zar-99}
used the Fisher Z-transform to test that whether all samples came from populations having common correlation coefficient, and \cite{ol-fi-95}
 obtained an asymptotic distribution of the difference between two sample correlation coefficients. \cite{kr-xi-07} 
proposed a generalized test variable and studied the performances of this test, Fisher Z-transform test and Olkin and Finn's method. They concluded
that Olkin and Finn's method is satisfactory for large sample sizes, and Fisher Z-transform test is conservative (i.e. its actual size is very
smaller than the nominal level) when the samples are small. In addition, the actual size of generalized test variable is close to the nominal level for
moderate samples.

The aim of this paper is to develop a modified signed log-likelihood ratio (MSLR) method for testing the equality of two  correlation coefficients
 in two independent bivariate normal distributions. We
used the test statistic proposed by
\cite{di-ma-st-01}
which has a simple form and  then applied the traditional signed log-likelihood ratio (SLR) test in its form. We propose a parametric bootstrap method to
approximate the distribution of SLR statistic and then use it to compute the MSLR statistic. Our simulation results show that MSLR test always are
satisfactory regardless of the sample sizes and values of the common correlation coefficient.

This paper is organized as follows: Some  preliminaries are given  in Section \ref{sec.pr}. The MSLR is explained for testing the
equality of two correlation coefficients  in Section \ref{sec.meth}. 
In Section \ref{sec.sim}, a simulation study is performed to evaluate and compare the actual sizes and powers of  MSLR, Fisher's Z-transform and generalized variable approaches. Also, the
approaches are illustrated using a real example. 

\section{Preliminaries}
\label{sec.pr}
Let $(X_{ij},Y_{ij})$,  $i=1,2$ and $j=1,2,\dots ,n_i$ be a random sample from the bivariate normal distribution with mean vector ${{\boldsymbol{%
\mu }}}_i=({\mu }_{1i},{\mu }_{2i})^{\prime }$ and variance covariance matrix
\begin{equation*}
{\Sigma }_i=\left[
\begin{array}{cc}
{\sigma }^2_{1i} & {\rho }_i{\sigma }_{1i}{\sigma }_{2i} \\
{\rho }_i{\sigma }_{1i}{\sigma }_{2i} & {\sigma }^2_{2i}
\end{array}
\right],\ \ \ \ i=1,2.
\end{equation*}

Our goal is to test the hypothesis
\begin{equation}\label{eq.H0}
H_{0}:{\rho }_{1}={\rho }_{2}=\rho \ \ \  {\rm vs.} \ \ \  H_{1}:\rho _{1}\neq \rho _{2},
\end{equation}
where $\rho $ is the common correlation coefficient. We use the method of SLR for this problem.  To apply this method, we need to find the full and constrained maximum likelihood estimators (MLE) of the unknown model parameters.  Considering
${\boldsymbol\theta}=({\boldsymbol\theta} _1,{\boldsymbol\theta }_2 )$, where
${\boldsymbol \theta }_i=(\mu_{1i},\mu_{2i},\sigma_{1i},\sigma_{2i},\rho_i)$,
it can be shown that the log-likelihood function can be written as
\begin{equation}\label{eq.lik}
\ell ( {\boldsymbol\theta }) =c+\ell_{1}( {{\boldsymbol{\theta }}}_{1}) +{\ell }_{2}( {{\boldsymbol{\theta }}}_{2}),
\end{equation}
where
\begin{eqnarray*}
{\ell }_{i}\left( {{\boldsymbol{\theta }}}_{i}\right)  &=&-n_{i}\log(
\sigma _{1i})-n_{i}\log(\sigma _{2i})-\frac{n_{i}}{2}\log
\left( 1-{\rho }_{i}^{2}\right) -\frac{n_{i}{\mu }_{1i}^{2}}{2\left( 1-{\rho
}_{i}^{2}\right) {\sigma }_{1i}^{2}}-\frac{n_{i}{\mu }_{2i}^{2}}{2\left( 1-{%
\rho }_{i}^{2}\right) {\sigma }_{2i}^{2}} \\
&&+\frac{n_{i}{\rho }_{i}{\mu }_{1i}{\mu }_{2i}}{2\left( 1-{\rho }%
_{i}^{2}\right) {\sigma }_{1i}{\sigma }_{2i}}-\frac{{\rho }%
_{i}\sum_{k=1}^{n_{i}}{x_{ij}^{2}}}{2\left( 1-{\rho }_{i}^{2}\right) {\sigma
}_{1i}^{2}}-\frac{{\rho }_{i}\sum_{j=1}^{n_{i}}{y_{ij}^{2}}}{2\left( 1-{\rho
}_{i}^{2}\right) {\sigma }_{2i}^{2}}+\frac{({\mu }_{1i}{\sigma }_{2i}-{\mu }%
_{2i}{\sigma }_{1i})\sum_{k=1}^{n_{i}}{x_{ij}}}{\left( 1-{\rho }%
_{i}^{2}\right) {\sigma }_{1i}^{2}{\sigma }_{2i}} \\
&&+\frac{({\mu }_{2i}{\sigma }_{1i}-{\mu }_{1i}{\sigma }_{2i})%
\sum_{k=1}^{n_{i}}{y_{ij}}}{\left( 1-{\rho }_{i}^{2}\right) {\sigma }%
_{2i}^{2}{\sigma }_{1i}}+\frac{{\rho }_{i}{\sigma }_{1i}\sum_{j=1}^{n_{i}}{x_{ij}y_{ij}}}{\left( 1-{\rho }%
_{i}^{2}\right) {\sigma }_{1i}{\sigma }_{2i}}.
\end{eqnarray*}

It is known that, under the full model (without any constraint), the MLE's of parameters ${\boldsymbol\theta}_i$
are ${\hat{{%
\boldsymbol{\theta }}}}_i{\boldsymbol{=}}{({\bar{X}}_i,{\bar{Y}}%
_i,S_{1i},S_{2i},R}_i)$, $i=1,2$, where
\begin{eqnarray*}
{\bar{X}}_i=\frac{1}{n_i}\sum^{n_i}_{j=1}{X_{ij}},\ \  {\bar{Y}}%
_i=\frac{1}{n_i}\sum^{n_i}_{j=1}{Y_{ij}}, \ \
R_i=\frac{S_{12(i)}}{\sqrt{S^2_{1i}S^2_{2i}}}, \\
 S^2_{1i}=\frac{1}{n_i}%
\sum^{n_i}_{j=1}{{\left(X_{ij}-{\bar{X}}_i\right)}^2},\ \  S^2_{2i}=%
\frac{1}{n_i}\sum^{n_i}_{j=1}{{\left(Y_{ij}-{\bar{Y}}_i\right)}^2},
\end{eqnarray*}
and $S_{12(i)}=\sum^{n_i}_{j=1}{\left(X_{ij}-{\bar{X}}%
_i\right)\left(Y_{ij}-{\bar{Y}}_i\right)}.$

For the constrained model i.e. under the hypothesis in \eqref{eq.H0},
\cite{pearson-33}
showed that the MLE of the common correlation coefficient, $\tilde{\rho }
$, is obtained by solving the following equation:
\begin{equation}\label{eq.rhot}
\frac{n_{1}\left( \ r_{1}-\tilde{\rho }\right) }{\left( 1-\tilde{%
\rho }r_{1}\right) }+\frac{n_{2}\left( \ r_{2}-\tilde{\rho }\right) }{%
\left( 1-\tilde{\rho }r_{2}\right) }=0,
\end{equation}
where $r_i$ is the observed value of $R_i$.
 \citep[For more details, refer to][]{pearson-33,do-ro-80}. 
Also, the constrained MLE's of parameters
${\mu }_{1i}$, ${\mu }_{2i}$, ${\sigma }^2_{1i}$ and ${\sigma }^2_{2i}$ are
\begin{equation*}
\tilde{\mu}_{1i}={\bar{x}}_i,\ \ \ {\tilde{\mu }}_{2i}={
\bar{y}}_i,\ \ {\tilde{\sigma }}^2_{1i}=
\frac{s^2_{1i}\left(1-\tilde{\rho } r_i\right)}{1-{\tilde{\rho }^2}},
\ \  {\tilde{\sigma }}
^2_{2i}=\frac{s^2_{2i}\left(1-\tilde{\rho }r_i\right)}{1-{\tilde{\rho }}^2},
\end{equation*}
where ${\bar{x}}_i,{\bar{y}}_i,s^2_{1i}$ and $s^2_{2i}$ are the observed value of ${\bar{X}}_i,{\bar{Y}}_i,S^2_{1i}$ and $S^2_{2i}$.
In this case, the MLE of parameter ${\boldsymbol\theta}_i$ is
$\tilde{\boldsymbol \theta }_i =(\tilde{\mu }_{1i}, \tilde{\mu}_{2i},\tilde{\sigma}_{1i},\tilde{\sigma}_{2i},\tilde\rho)$.

\cite{do-ro-80} 
defined
\begin{equation}\label{eq.RF}
R_{F}=\frac{\exp ( 2\bar{Z})-1}{\exp ( 2\bar{Z})+1}=\tanh(\bar{Z}) ,
\end{equation}
where $\bar{Z}=\frac{(n_1-3)Z_1+(n_2-3)Z_2}{n_1+n_2-6}$ and
$Z_i=\frac{1}{2}\log(\frac{1+R_i}{1-R_i})=\tanh^{-1}(R_i)$,
$i=1,2$.
They  showed that the estimators $\tilde{\rho }$ and $R_F$ are close when the samples sizes are equal, i.e. $n_1=n_2$.
 Simulation studies (not reported here) show that the results for MSLR method  based on the estimators $\tilde{\rho }$ and $R_F$ are close to each other.  But the estimator $R_{F}$ decrease the execution time. Therefore, $R_F$ can be
used instead of $\tilde{\rho }$ to estimate the common correlation coefficient $\rho $.

The following lemma helps us to generate the sample correlation coefficient from a random sample of a bivariate normal distribution. { It is notable that this formula is different from formula (16) of
\cite{kr-xi-07}
and also it cannot be used as the generalized pivotal quantity.}

\begin{lemma} Let $R_i$ be the sample correlation coefficient from a bivariate normal distribution with mean vector
${\boldsymbol \mu }_i$ and variance-covariance matrix $\Sigma_i$. Then
\begin{equation}\label{eq.Ri}
R_{i}\overset{d}{=}\frac{{\rho }_{i}^{\ast }V_{i}+N_{i}\ }{\sqrt{{\left( {\rho }%
_{i}^{\ast }V_{i}+N_{i}\right) }^{2}+{W_{i}}^{2}}},
\end{equation}
where ${\rho }^*_i=\frac{\rho_i}{\sqrt{1-\rho^2_i}}$ , and $V^2_i$, $W^2_i$, and $N_i$ are independent random variables with
$\chi^2_{(n-1)}$, $\chi^2_{(n-2)}$ and $N(0,1)$ distributions, respectively.
\end{lemma}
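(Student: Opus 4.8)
The plan is to exploit two structural facts: the location/scale invariance of the sample correlation coefficient, and the regression (conditional) representation of a bivariate normal. First I would note that $R_i$ is unchanged under $(X_{ij},Y_{ij})\mapsto(a+bX_{ij},\,c+dY_{ij})$ for any $b,d>0$, so without loss of generality I may take ${\mu}_{1i}={\mu}_{2i}=0$ and ${\sigma}_{1i}={\sigma}_{2i}=1$. Writing the data as vectors $\mathbf{x}=(X_{i1},\dots,X_{in})'$ and $\mathbf{y}=(Y_{i1},\dots,Y_{in})'$ and letting $P=I-\tfrac{1}{n}\mathbf{1}\mathbf{1}'$ be the centering projection, I would record that $R_i$ is the cosine of the angle between the centered vectors, namely $R_i=(P\mathbf{x})'(P\mathbf{y})/(\|P\mathbf{x}\|\,\|P\mathbf{y}\|)$; this makes the scale-freeness explicit and removes any dependence on the normalizing constants in $S_{1i}^{2},S_{2i}^{2}$.

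Next I would invoke the conditional law of $Y$ given $X$. In the standardized model $Y_{ij}={\rho}_i X_{ij}+e_j$ with $e_j$ i.i.d.\ $N(0,1-{\rho}_i^{2})$ independent of $\mathbf{x}$, so projecting gives $P\mathbf{y}={\rho}_i P\mathbf{x}+P\mathbf{e}$. I would then condition on $\mathbf{x}$ and work inside the $(n-1)$-dimensional subspace orthogonal to $\mathbf{1}$. Choosing an orthonormal basis $f_1,\dots,f_{n-1}$ of that subspace with $f_1=P\mathbf{x}/\|P\mathbf{x}\|$, the coordinates of $P\mathbf{y}$ are $\langle P\mathbf{y},f_1\rangle={\rho}_i\|P\mathbf{x}\|+\langle P\mathbf{e},f_1\rangle$ and $\langle P\mathbf{y},f_k\rangle=\langle P\mathbf{e},f_k\rangle$ for $k\geq 2$, since $P\mathbf{x}\perp f_k$. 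Because $\mathbf{e}\sim N(0,(1-{\rho}_i^{2})I)$ is spherical on the subspace, the scaled coordinates $g_k=\langle P\mathbf{e},f_k\rangle/\sqrt{1-{\rho}_i^{2}}$ are i.i.d.\ $N(0,1)$ given $\mathbf{x}$, regardless of the basis.

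Substituting into the cosine formula, the common factor $\sqrt{1-{\rho}_i^{2}}$ cancels between numerator and denominator, and using ${\rho}_i\|P\mathbf{x}\|/\sqrt{1-{\rho}_i^{2}}={\rho}_i^{*}\|P\mathbf{x}\|$ I would obtain
\begin{equation*}
R_i=\frac{{\rho}_i^{*}\|P\mathbf{x}\|+g_1}{\sqrt{\bigl({\rho}_i^{*}\|P\mathbf{x}\|+g_1\bigr)^{2}+\sum_{k=2}^{n-1}g_k^{2}}}.
\end{equation*}
Identifying $V_i=\|P\mathbf{x}\|$, so that $V_i^{2}=\sum_j(X_{ij}-\bar{X}_i)^{2}\sim\chi^2_{(n-1)}$, together with $N_i=g_1\sim N(0,1)$ and $W_i^{2}=\sum_{k=2}^{n-1}g_k^{2}\sim\chi^2_{(n-2)}$, yields exactly the representation \eqref{eq.Ri}.

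The step I expect to be the main obstacle is the mutual independence of $V_i$, $N_i$ and $W_i$, since the basis $f_1,\dots,f_{n-1}$ is itself built from $\mathbf{x}$. I would settle it as follows: $V_i$ is a function of $\mathbf{x}$ alone, while conditionally on $\mathbf{x}$ the pair $(g_1,\sum_{k\geq 2}g_k^{2})$ is a standard normal together with an independent $\chi^2_{(n-2)}$ variate. The crucial point is that this conditional law does \emph{not} depend on $\mathbf{x}$ (the dependence on $\mathbf{x}$ enters only through the choice of orthonormal frame, which is immaterial by the spherical symmetry of $\mathbf{e}$). Hence $(N_i,W_i^{2})$ is independent of $\mathbf{x}$, and therefore of $V_i$, while $N_i$ and $W_i^{2}$ are independent of each other; this gives the asserted mutual independence and completes the argument.
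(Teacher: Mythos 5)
Your proof is correct, but it takes a genuinely different route from the paper's. The paper works at the level of the sample covariance matrix: it notes that $n_iS_i$ has a Wishart $W(n_i-1,\Sigma_i)$ distribution, standardizes by the Cholesky factor $L_i$ of $\Sigma_i$, and invokes Bartlett's decomposition (Theorem 3.2.14 of Muirhead) to write the standardized Wishart matrix as $C_iC_i'$ with $C_i$ lower triangular having independent entries $V_i$, $N_i$, $W_i$; multiplying out $L_iC_iC_i'L_i'$ and forming $A_{12}^{(i)}/\sqrt{A_{11}^{(i)}A_{22}^{(i)}}$ then gives \eqref{eq.Ri}. You instead reduce to the standardized case by affine invariance of $R_i$, use the regression representation $Y_{ij}=\rho_iX_{ij}+e_j$, and run a conditioning-plus-rotation argument in the $(n-1)$-dimensional subspace orthogonal to $\mathbf{1}$ --- in effect re-deriving the $2\times 2$ Bartlett decomposition from first principles. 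The paper's route is shorter once the Wishart machinery is granted; yours is self-contained, makes the degrees of freedom $n-1$ and $n-2$ transparent (the paper's proof in fact misprints $W_i^2\sim\chi^2_{(n_i-1)}$ where the lemma correctly has $\chi^2_{(n_i-2)}$, a slip your accounting of basis vectors cannot make), and your final paragraph supplies explicitly the independence justification that the paper delegates to the citation. Both arguments establish the same identity, and your handling of the only delicate point --- that the conditional law of $(g_1,\sum_{k\ge 2}g_k^2)$ given $\mathbf{x}$ is free of $\mathbf{x}$, hence the triple $(V_i,N_i,W_i^2)$ is mutually independent --- is sound.
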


\begin{proof}
Let $S_{i}=\left[
\begin{array}{cc}
S_{1i}^{2} & S_{12(i)} \\
S_{12(i)} & S_{2i}^{2}%
\end{array}%
\right] $.
It is well-known that  $A_{i}=n_{i}S_{i}\sim W(n_{i}-1,{\Sigma}_{i})$, i.e. it has a Wishart distribution with $n_{i}-1$ degrees of
freedom and parameter ${\Sigma }_{i}$. Since ${\Sigma }_{i}$ is a positive definite matrix, there is a unique lower triangular matrix, $L_{i}$, such
that $L_{i}L_{i}^{\prime }={\Sigma }_{i}$ (Cholesky decomposition)  and it is easily verified that
\[
P_{i}=L_{i}^{-1}A_{i}L_{i}^{\prime -1}\sim W\left( n_{i}-1,I\right),
\]
where $I$ is the identity matrix. Put $P_{i}=C_{i}C_{i}^{\prime }$.
From Theorem 3.2.14 of
\cite{muirhead-82},
the elements of matrix $C_{i}=\left[
\begin{array}{cc}
V_{i} & 0 \\
N_{i} & W_{i}%
\end{array}%
\right] $ are independent and distributed as
\[
V_{i}^{2}\sim \chi _{\left( n_{i}-1\right) }^{2},\ W_{i}^{2}\sim \chi
_{\left( n_{i}-1\right) }^{2}\ \ \text{and}\ \ \ \ N_{i}\sim N\left(
0,1\right) .
\]
It can be shown that matrix $L_{i}$ has the following form:
\[L_{i}=\left[ \begin{array}{cc}
{\sigma }_{1i} & 0 \\
{\rho}_i {\sigma }_{2i} & {\sigma }_{2i}\sqrt{1-{\rho }_i^2} \end{array}
\right].\]
Therefore, we have
\begin{eqnarray*}
A_{i} &=&\left( A_{kl}^{\left( i\right) }\right) \overset{d}{=}L_{i}C_{i}C_{i}^{\prime
}L_{i}^{\prime } \\
&=&
\left[
\begin{array}{cc}
\sigma _{1i}^{2}V_{i}^{2} & \sigma _{1i}\sigma _{2i}\sqrt{1-\rho _{i}^{2}}%
\left( \tilde{\rho}_{i}V_{i}^{2}+N_{i}V_{i}\right)  \\
\sigma _{1i}\sigma _{2i}\sqrt{1-\rho _{i}^{2}}\left( \tilde{\rho}%
_{i}V_{i}^{2}+N_{i}V_{i}\right)  & \sigma _{2i}^{2}\left( 1-\rho
_{i}^{2}\right) \left[ \left( \tilde{\rho}_{i}V_{i}+N_{i}\right)
^{2}+W_{i}^{2}\right]
\end{array}%
\right].
\end{eqnarray*}
Since $R_{i}=\frac{A_{12}^{\left( i\right) }}{\sqrt{A_{11}^{\left( i\right)
}A_{22}^{\left( i\right) }}}$, the proof is completed.

\end{proof}

\section{Testing the equality of two correlation coefficients}
\label{sec.meth}
In this section, we consider the problem of testing the equality of two independent correlation coefficients. At first, we propose the method of
MSLR and give an algorithm that can be used for this problem. Then, we review two other existing approaches.

\subsection{Modified Signed log-likelihood ratio test}

It is easily verified that the SLR test statistic to test the hypothesis in \eqref{eq.H0} has the following form
\begin{eqnarray}\label{eq.SLR}
{\rm SLR} &=&\sqrt{2}{\rm sign}(r_{1}-r_{2}) \sqrt{2(\ell (\hat{{\boldsymbol{\theta }}})-\ell (\tilde{{\boldsymbol{\theta }}})) } \nonumber   \\
&=&{\rm sign}( r_{1}-r_{2})\left( \sum_{i=1}^{2}{n_{i}\log( \frac{( 1-\tilde{\rho }r_{i})^{2}}{
(1-r_{i}^{2}) ( 1-\tilde{\rho}^{2}) }) }\right)^{\frac{1}{2}},
\end{eqnarray}
where
$\hat{\boldsymbol\theta }=(\hat {\boldsymbol \theta}_1, \hat{\boldsymbol\theta }_2)$
and
$\tilde{\boldsymbol\theta}=(\tilde{\boldsymbol\theta }_1,\tilde{\boldsymbol\theta }_2)$,
and
$\mathrm{sign}(x)=1$, if $x>0$ and $\mathrm{sign}(x)=-1$, if $x<0$.

It is well known that SLR is asymptotically distributed
as a standard normal distribution \citep{co-hi-74},
and a   p-value for testing
the hypothesis in \eqref{eq.H0} is
\begin{equation}\label{eq.pSLR}
p=2( 1-\Phi ( | \mathrm{SL}R_0|) ),
\end{equation}
where $\mathrm{SLR}_0$ is the observed value of the statistic SLR and $\Phi(t)$ is the standard normal distribution function.

 If we use the estimator $R_F$ instead of $\tilde\rho$, the SLR statistic in \eqref{eq.SLR} is
rewritten as
\begin{equation}\label{eq.SLRF}
\mathrm{SLR}=\mathrm{sign}( r_{1}-r_{2})
\left( \sum_{i=1}^{k}
n_{i}\log ( \frac{( 1-R_{F} r_{i})^{2}}{(
1-r_{i}^{2}) ( 1-R_{F}^{2})}) \right)^{\frac{1}{
2}}.
\end{equation}

\cite{pi-pe-92} 
showed the SLR test is not very accurate, and some modifications are needed to increase the accuracy of the SLR.  There exist various ways to improve the accuracy
of this approximation by adjusting the SLR statistic. For the various ways to improve the accuracy of SLR method, refer to the works of
\cite{barndorff-86,barndorff-91},
\cite{Skovgaard-01},  
and
\cite{di-ma-st-01}. 
We used the method proposed by \cite{di-ma-st-01},
which has the following form
\begin{equation}\label{eq.MSLR}
\mathrm{MSLR}=\frac{\mathrm{SLR}-m\left( \mathrm{SLR}\right) }{\sqrt{v\left(\mathrm{SLR}\right)}},
\end{equation}
where $m(\mathrm{SLR})$ and $v(\mathrm{SLR})$ are the mean and variance of the SLR statistic evaluated at the
constrained MLE's of the model parameters and is asymptotically distributed as a standard normal distribution.

\cite{kr-le-14} 
used the parametric bootstrap approach to approximate the mean and variance of the MSLR test statistic for the problem of testing the equality of normal
coefficients of variation. We use this approach for the problem of testing the equality of two normal correlation coefficients. In Section \ref{sec.sim}, using Monte Carlo simulation,
we will show  that
this new method is
more accurate than the other competing methods. This approach is given in the
following algorithm:

\begin{algorithm}
Given $r_1$ and $r_2$,

\noindent 1. Compute $r_F$, the observed value of estimator $R_F$ in \eqref{eq.RF}.

\noindent 2. Generate $V^2_i\sim {\chi }^2_{(n_i-1)}$, $W^2_i\sim {\chi }^2_{(n_i-2)}$, $N_i\sim N\left(0,1\right)$, $i=1,2$.

\noindent  3. Compute $r^*_F=r_F/\sqrt{1-r^2_F}$.

\noindent  4. Compute $r^*_i$ by substituting $r_F$ instead of $\rho $ in \eqref{eq.Ri} as
\begin{equation*}
r^*_i=\frac{r^*_FV_i+N_i\ }{\sqrt{{\left(r^*_FV_i+N_i\right)}^2+W^2_i}}.
\end{equation*}

\noindent 5. Compute the test statistic SLR in \eqref{eq.SLRF}.

 \noindent 6. Repeat steps 3-5 for a large number of times (say M = 10,000).

\noindent 7. Compute the sample mean and sample variance of SLR and compute the MSLR
in \eqref{eq.MSLR}.

\noindent 8.  Determine the p-value for testing $H_0:{\rho }_1={\rho }_2$ vs $H_1:{\rho }_1\ne {\rho}_2$ as
\begin{equation}\label{eq.3-9}
\mathrm{p-value}=2\left( 1-\Phi ( | \mathrm{MSLR}| ) \right) .
\end{equation}
\end{algorithm}

\subsection{Fisher's Z-transform}

 It is well-known that
\begin{equation*}
Z_i=\frac{1}{2}{\log (\frac{1+R_i}{1-R_i})}=\tanh^{-1} (R_i),
\end{equation*}
has asymptotic normal distribution with mean ${\tanh^{-1} (\rho _i)}$ and variance $\frac{1}{n_i-3}$. Therefore, a test statistic for testing
$H_0:\rho _1={\rho }_2,$ vs $H_1:{\rho }_1\ne {\rho }_2$ can be given by extending the one-sample Fisher's Z-transformation to the two-sample case.
Consider the following test statistic
\begin{equation}\label{eq.FZ}
FZ=\frac{ Z_{1}-Z_{2}  }{\sqrt{\frac{1%
}{n_{1}-3}-\frac{1}{n_{2}-3}}}.
\end{equation}
Then, $FZ$ has asymptotic standard normal distribution, and the null hypothesis  is rejected if
$\left\vert FZ\right\vert >z_{\alpha/2}$. For more details, refer to \cite{zar-99} and \cite{kr-xi-07}. 

\subsection{Generalized test variable}

\cite{kr-xi-07}
proposed a generalized pivotal variable for $\rho_i$ as
\begin{equation}\label{eq.Gi}
G_{\rho _{i}}=\frac{r_{i}^{\ast }W_{i}-Z_{i} }{\sqrt{{( r_{i}^{\ast}W_{i}-Z_{i}) }^{2}+V_{i}^{2}}},\ \ \ \ i=1,2,
\end{equation}
where $r^*_i=\frac{r_i}{\sqrt{1-r^2_i}}$, and $V_i$, $W_i$, and $Z_i$ are independent random variables with
\begin{equation*}
V^2_i\sim {\chi }^2_{(n_i-1)},\ \ \  W^2_i\sim {\chi }^2_{(n_i-2)},\ \ \  Z_i\sim N(0,1).
\end{equation*}
Therefore, a generalized pivotal variable for ${\rho }_1-{\rho }_2$ is given as
\begin{equation*}
G_{\rho }=G_{{\rho }_1}-G_{{\rho }_2}.
\end{equation*}
So, the generalized p-value for testing $H_0:{\rho }_1={\rho }_2$ vs
$H_1:\rho_1\ne \rho_2$ is given by
\begin{equation}\label{eq.GPV}
p=2 \min\left\{ P( G_\rho<0) ,P( G_\rho>0) \right\}.
\end{equation}

\section{Numerical study}
\label{sec.sim}
\subsection{Simulation study}

We performed a simulation study with 10,000 replications to evaluate and compare the actual sizes of three approaches: the modified signed likelihood
ratio  test (MSLR), Fisher's Z-transform (FZ)  test, and generalized test variable (GV). We generate random samples of size $n_1$ and $n_2$
from two independent bivariate normal distributions for different values of common correlation $\rho =0.0,0.1,0.2,\dots,0.9$.
We obtained the sample correlation coefficient  and then the \textit{p}-values  of the MSLR, FZ and GV to test the hypothesis
$H_{0}:\rho_{1}=\rho_2$ vs  $H_{1}:\rho_{1}\neq{\rho }_{2}$. Here, we consider the nominal level $\alpha =0.05$. The results are given in Table \ref{tab.1}.

We can conclude that\\
i. the actual size of MSLR test is satisfactory for all different values of common correlation coefficient and sample sizes.\\
ii. the actual size of FZ test is smaller than the nominal level when the sample 
sizes are small,\\
iii. the actual size of GV test is very smaller than the nominal level when $n_1$ is small and $n_2$ is large.

Since, the MSLR test is the only test that controls the correct frequency of rejected hypotheses in all cases, we recommend the MSLR for practical applications.

We also performed a simulation study to compare the powers of the considered approaches. The results are given in Tables \ref{tab.2.1} and \ref{tab.2.2}.
It can be concluded that the powers of the
 three tests MSLR, GV and Fisher Z-transform are close  when the sample sizes are equal. But the power of MSLR is larger than powers of GV and Fisher Z-transform when the sample sizes are  unequal.

\begin{table}[ht]
\begin{center}
\caption{ The actual sizes of the tests at nominal level $\alpha =0.05$. }\label{tab.1}
\begin{tabular}{|c|l|cccccccccc|}
\hline
&  & \multicolumn{10}{|c|}{$\rho $} \\ \hline
$n_1,n_2$ & Method & 0.0 & 0.1 & 0.2 & 0.3 & 0.4 & 0.5 & 0.6 & 0.7 & 0.8 & 0.9 \\ \hline

5,5 & MSLR & 0.052 & 0.053 & 0.052 & 0.051 & 0.053 & 0.051 & 0.054 & 0.049 & 0.054 & 0.053 \\
&  FZ & 0.046 & 0.043 & 0.045 & 0.043 & 0.041 & 0.047 & 0.045 & 0.040 & 0.038 & 0.041 \\
& GV & 0.052 & 0.051 & 0.052 & 0.051 & 0.050 & 0.051 & 0.051 & 0.053 & 0.051& 0.051 \\ \hline

5,10 & MSLR & 0.050 & 0.050 & 0.051 & 0.048 & 0.049 & 0.049 & 0.050 & 0.052 & 0.049 & 0.055 \\
&  FZ & 0.049 & 0.047 & 0.045 & 0.045 & 0.043 & 0.045 & 0.046 & 0.046 & 0.045 & 0.044 \\
& GV & 0.049 & 0.048 & 0.050 & 0.051 & 0.051 & 0.049 & 0.049 & 0.050 & 0.051 & 0.048 \\ \hline

10,10 &  MSLR & 0.051 & 0.055 & 0.049 & 0.050 & 0.048 & 0.049 & 0.048 & 0.051 & 0.049 & 0.051 \\
&  FZ & 0.048 & 0.050 & 0.050 & 0.048 & 0.049 & 0.049 & 0.049 & 0.052 & 0.049 & 0.045 \\
& GV & 0.053 & 0.054 & 0.051 & 0.051 & 0.052 & 0.051 & 0.051 & 0.051 & 0.051 & 0.053 \\ \hline

5,15 &  MSLR & 0.056 & 0.050 & 0.051 & 0.050 & 0.050 & 0.050 & 0.048 & 0.050 & 0.050 & 0.052 \\
&  FZ & 0.045 & 0.046 & 0.051 & 0.046 & 0.047 & 0.044 & 0.042 & 0.044 & 0.046 & 0.044 \\
& GV & 0.047 & 0.044 & 0.047 & 0.045 & 0.044 & 0.047 & 0.048 & 0.045 & 0.047 & 0.045 \\ \hline

5,25 &  MSLR & 0.049 & 0.053 & 0.048 & 0.048 & 0.051 & 0.048 & 0.049 & 0.050 & 0.052 & 0.049 \\
&  FZ & 0.052 & 0.047 & 0.046 & 0.050 & 0.045 & 0.045 & 0.045 & 0.045 & 0.044 & 0.046 \\
& GV & 0.040 & 0.036 & 0.038 & 0.033 & 0.036 & 0.037 & 0.035 & 0.036 & 0.037 & 0.039 \\ \hline

\end{tabular}
\end{center}
\end{table}

\begin{table}
\begin{center}
\caption{ Empirical powers of the tests at nominal level $\alpha =0.05$ with ${\rho}_1=0.05$.} \label{tab.2.1}

\begin{tabular}{|c|l|ccccccccc|} \hline

&  & \multicolumn{9}{|c|}{${\rho }_2$} \\ \hline
$n_1,n_2$ & Method & 0.15 & 0.25 & 0.35 & 0.45 & 0.55 & 0.65 & 0.75 & 0.85 & 0.95 \\ \hline

5,5 &  MSLR & 0.052 & 0.059 & 0.069 & 0.076 & 0.087 & 0.102 & 0.113 & 0.131 & 0.143 \\
& FZ & 0.049 & 0.051 & 0.056 & 0.066 & 0.071 & 0.092 & 0.098 & 0.119 & 0.129 \\
& GV & 0.058 & 0.061 & 0.070 & 0.080 & 0.091 & 0.121 & 0.125 & 0.134 & 0.155 \\ \hline

5,10 &  MSLR & 0.055 & 0.064 & 0.071 & 0.085 & 0.105 & 0.131 & 0.151 & 0.175 & 0.208 \\
& FZ & 0.046 & 0.056 & 0.064 & 0.067 & 0.082 & 0.101 & 0.124 & 0.142 & 0.168 \\
& GV & 0.056 & 0.057 & 0.063 & 0.074 & 0.085 & 0.108 & 0.121 & 0.139 & 0.171 \\ \hline

10,10 &  MSLR & 0.057 & 0.067 & 0.088 & 0.097 & 0.151 & 0.201 & 0.245 & 0.290 & 0.356 \\
& FZ & 0.051 & 0.068 & 0.091 & 0.117 & 0.153 & 0.208 & 0.240 & 0.297 & 0.349 \\
& GV & 0.057 & 0.066 & 0.094 & 0.122 & 0.153 & 0.204 & 0.252 & 0.294 & 0.355 \\ \hline

15,10 &  MSLR & 0.063 & 0.082 & 0.111 & 0.164 & 0.226 & 0.290 & 0.372 & 0.448 & 0.523 \\
& FZ & 0.048 & 0.072 & 0.094 & 0.133 & 0.181 & 0.235 & 0.291 & 0.348 & 0.414 \\
& GV & 0.055 & 0.070 & 0.095 & 0.135 & 0.185 & 0.236 & 0.291 & 0.363 & 0.410 \\ \hline

5,15 & MSLR & 0.061 & 0.063 & 0.088 & 0.099 & 0.115 & 0.136 & 0.192 & 0.214 & 0.253 \\
& FZ & 0.046 & 0.052 & 0.067 & 0.080 & 0.100 & 0.119 & 0.139 & 0.167 & 0.193 \\
& GV & 0.044 & 0.048 & 0.057 & 0.070 & 0.079 & 0.099 & 0.123 & 0.152 & 0.173 \\ \hline

5,25 & MSLR & 0.069 & 0.076 & 0.089 & 0.108 & 0.123 & 0.171 & 0.189 & 0.235 & 0.264 \\
& FZ & 0.052 & 0.059 & 0.067 & 0.084 & 0.102 & 0.122 & 0.140 & 0.162 & 0.196 \\
& GV & 0.043 & 0.042 & 0.054 & 0.064 & 0.073 & 0.091 & 0.114 & 0.133 & 0.158 \\ \hline

20,20 & MSLR & 0.073 & 0.111 & 0.172 & 0.251 & 0.377 & 0.456 & 0.528 & 0.652 & 0.709 \\
& FZ & 0.074 & 0.105 & 0.178 & 0.252 & 0.351 & 0.448 & 0.540 & 0.638 & 0.717 \\
& GV & 0.077 & 0.112 & 0.177 & 0.255 & 0.355 & 0.435 & 0.546 & 0.643 & 0.710 \\ \hline

25,25 & MSLR & 0.078 & 0.123 & 0.201 & 0.324 & 0.442 & 0.549 & 0.638 & 0.776 & 0.823 \\
& FZ & 0.080 & 0.135 & 0.214 & 0.313 & 0.422 & 0.541 & 0.650 & 0.741 & 0.814 \\
& GV & 0.079 & 0.133 & 0.212 & 0.317 & 0.425 & 0.538 & 0.651 & 0.739 & 0.813 \\ \hline

\end{tabular}
\end{center}
\end{table}

\begin{table}
\begin{center}
\caption{ Empirical powers of the tests at nominal level $\alpha =0.05$ with ${\rho}_1=0.05$.}  \label{tab.2.2}

\begin{tabular}{|c|l|ccccccccc|} \hline

&  & \multicolumn{9}{|c|}{${\rho }_2$} \\ \hline
$n_1,n_2$ & Method & -0.15 & -0.25 & -0.35 & -0.45 & -0.55 & -0.65 & -0.75 & -0.85 & -0.95 \\ \hline

5,5 &  MSLR & 0.054 & 0.066 & 0.074 & 0.085 & 0.106 & 0.121 & 0.144 & 0.161 & 0.179 \\
& FZ & 0.049 & 0.058 & 0.060 & 0.075 & 0.090 & 0.105 & 0.120 & 0.142 & 0.158 \\
& GV & 0.063 & 0.071 & 0.080 & 0.090 & 0.109 & 0.119 & 0.143 & 0.166 & 0.192 \\ \hline

5,10 &  MSLR & 0.058 & 0.072 & 0.089 & 0.114 & 0.133 & 0.166 & 0.194 & 0.216 & 0.253 \\
& FZ & 0.051 & 0.068 & 0.074 & 0.092 & 0.108 & 0.130 & 0.154 & 0.174 & 0.202 \\
& GV & 0.056 & 0.061 & 0.074 & 0.092 & 0.107 & 0.126 & 0.150 & 0.181 & 0.199 \\ \hline

10,10 &  MSLR & 0.066 & 0.084 & 0.116 & 0.164 & 0.205 & 0.252 & 0.308 & 0.368 & 0.430 \\
& FZ & 0.068 & 0.090 & 0.117 & 0.162 & 0.206 & 0.248 & 0.304 & 0.364 & 0.425 \\
& GV & 0.071 & 0.089 & 0.120 & 0.164 & 0.208 & 0.257 & 0.312 & 0.368 & 0.424 \\ \hline

15,10 &  MSLR & 0.072 & 0.113 & 0.121 & 0.192 & 0.247 & 0.278 & 0.385 & 0.445 & 0.505 \\
& FZ & 0.069 & 0.099 & 0.121 & 0.188 & 0.247 & 0.304 & 0.372 & 0.440 & 0.511 \\
& GV & 0.071 & 0.100 & 0.141 & 0.186 & 0.252 & 0.299 & 0.368 & 0.442 & 0.512 \\ \hline

5,15 & MSLR & 0.058 & 0.064 & 0.080 & 0.101 & 0.116 & 0.136 & 0.179 & 0.188 & 0.254 \\
& FZ & 0.053 & 0.058 & 0.068 & 0.078 & 0.101 & 0.122 & 0.143 & 0.166 & 0.193 \\
& GV & 0.048 & 0.055 & 0.061 & 0.077 & 0.086 & 0.100 & 0.127 & 0.147 & 0.176 \\ \hline

5,25 & MSLR & 0.061 & 0.071 & 0.083 & 0.092 & 0.118 & 0.158 & 0.196 & 0.240 & 0.268 \\
& FZ & 0.053 & 0.057 & 0.068 & 0.078 & 0.101 & 0.122 & 0.143 & 0.166 & 0.193 \\
& GV & 0.038 & 0.044 & 0.055 & 0.065 & 0.081 & 0.098 & 0.113 & 0.135 & 0.164 \\ \hline

20,20 & MSLR & 0.063 & 0.103 & 0.183 & 0.246 & 0.366 & 0.417 & 0.565 & 0.660 & 0.709 \\
& FZ & 0.075 & 0.119 & 0.179 & 0.249 & 0.349 & 0.439 & 0.546 & 0.630 & 0.716 \\
& GV & 0.071 & 0.114 & 0.176 & 0.257 & 0.346 & 0.448 & 0.552 & 0.638 & 0.717 \\ \hline

25,25 & MSLR & 0.064 & 0.145 & 0.207 & 0.319 & 0.418 & 0.569 & 0.620 & 0.734 & 0.819 \\
& FZ & 0.077 & 0.137 & 0.211 & 0.319 & 0.426 & 0.541 & 0.651 & 0.750 & 0.813 \\
& GV & 0.079 & 0.133 & 0.212 & 0.314 & 0.432 & 0.542 & 0.651 & 0.735 & 0.821 \\ \hline

\end{tabular}
\end{center}
\end{table}

\newpage

\subsection{Real Data}

In this example, we test the equality of two independent correlations in three groups of data. This data set is given by \cite{bi-br-gu-04}
and also  have been analyzed by \cite{kr-xi-07}. 

For each of two groups of 14 men and 14 women, the  sample correlation between a verbal memory score (v) and laterality of blood flow in each of three brain regions, namely, temporal (t), frontal (f) and subcortical (s) are obtained in Table \ref{tab.3}. It may be of interest to compare the correlations between the laterality of blood flow in each brain region and verbal memory score across gender. The results are given in Table \ref{tab.4}. We can find that there is no significant difference between male and female correlations in frontal and subcortical cases.

\begin{table}[ht]
\begin{center}
\caption{ correlations between laterality of blood flow in three brain regions and verbal memory score.}\label{tab.3}
\begin{tabular}{|c|c|c|c|}
\hline
Gender & \multicolumn{3}{|c|}{laterality of blood flow} \\ \hline
& Temporal & Subcortical & Frontal \\ \hline
Male & $r_{\mathrm{M,vt}}=-0.340$ & $r_{\mathrm{M,vs}}=0.641$ & $r_{\mathrm{%
M,vf}}=-0.032$ \\ \hline
Female & $r_{\mathrm{F,vt}}=0.812$ & $r_{\mathrm{F,vs}}=0.491$ & $r_{\mathrm{%
F,vf}}=-0.212$ \\ \hline
\end{tabular}
\end{center}
\end{table}

\begin{table}[ht]
\begin{center}
\caption{ P-values of the tests for equality of correlations between laterality of blood flow in three brain regions and verbal memory score.}\label{tab.4}

\begin{tabular}{|l|c|c|c|}
\hline
method & ${\rho }_{\mathrm{M,vt}}={\rho }_{\mathrm{F,vt}}$ & ${\rho }_{%
\mathrm{M,vs}}={\rho }_{\mathrm{F,vs}}$ & ${\rho }_{\mathrm{M,vf}}={\rho }_{%
\mathrm{F,vf}}$ \\ \hline
MSLR & 0.0008 & 0.5978 & 0.6677 \\ \hline
GV & 0.0008 & 0.5948 & 0.6682 \\ \hline
Fisher Z & 0.0004 & 0.6018 & 0.6673 \\ \hline
\end{tabular}
\end{center}
\end{table}

\section{Conclusion}
\label{sec.con}
%
%

{
Existing methods for comparing the correlation coefficients of two independent bivariate normal distributions do not perform well in a range of
small-sample settings.
\cite{kr-xi-07}
obtained a generalized pivotal quantity for difference of two correlation coefficients and they gave a method for testing the equality of two correlation coefficients using this generalized pivotal quantity. By using a simulation study, they showed the test size of their method is greater than the nominal level for small sample sizes. In other words, they showed that their method is liberal.

Using the method of modified signed log likelihood,
\cite{kr-le-14}
considered the problem of equality of coefficients of variation for independent normal populations. This method is an exact method to test a hypothesis for unknown parameter. The accuracy of this method is very satisfactory such that the actual size of test is approximately close to nominal level even for small sample sizes.
In this article, we used the MSLR method for testing the equality of two independent correlation coefficients because of the accuracy of this method for inference about the unknown model parameter and that the other competing methods have deviations that we cannot rely to them.
In this paper, we explained the generating the sample correlation coefficients of a bivariate normal distribution (See Lemma 2.1). Then, we obtained the MSLR test statistics for the problem of the equality of two correlation coefficients.
All the mathematical formulas obtained in our paper are different from that of used in chapter 4 of
\cite{kr-xi-07}.
 As we see, for using the MSLR method, one should consider the MLEs of the unknown parameters but the MLE was not stated in
 \cite{kr-xi-07}
at all.
We compared our method with the method of GV used by
\cite{kr-xi-07}
 and another competing method by simulation studies. We found that MSLR test always are satisfactory in terms of actual sizes regardless of the sample sizes and values of the common correlation coefficient in comparison with existing methods. Also, the power of MSLR is larger than powers of GV and Fisher Z-transform when the sample sizes are unequal.
}

\bibliographystyle{apa}

\end{document}